\newtheorem{theorem}{Theorem}
\newtheorem{proposition}{Proposition}
\newtheorem{remark}{Remark}
\newtheorem{assumption}{Assumption}
\begin{document}

\title{The two filter formula reconsidered: \\Smoothing in partially observed Gauss--Markov models without information parametrization}

\author{Filip Tronarp}

\date{Centre for Mathematical Sciences, Lund University \\
    \today}

\maketitle

\begin{abstract}
In this article, the two filter formula is re-examined in the setting of partially observed Gauss--Markov models.
It is traditionally formulated as a filter running backward in time,
where the Gaussian density is parametrized in ``information form''.
However, the quantity in the backward recursion is strictly speaking not a distribution, but a likelihood.
Taking this observation seriously, a recursion over log-quadratic likelihoods is formulated instead,
which obviates the need for ``information'' parametrization.
In particular, it greatly simplifies the square-root formulation of the algorithm.
Furthermore, formulae are given for producing the forward Markov representation of the
a posteriori distribution over paths from the proposed likelihood representation.
\end{abstract}

\section{Introduction}
Consider the following partially observed Markov process
\begin{subequations}\label{eq:pomp}
\begin{align}
x_0 &\sim \pi_0(\cdotp) \\
x_t \mid x_{t-1} &\sim \pi_{t \mid t-1}(\cdotp \mid x_{t-1}) \\
y_t \mid x_t &\sim h_{t \mid t}(\cdotp \mid x_t),
\end{align}
\end{subequations}
where $x_t$ is the state at time $t$, taking values in $\mathbb{R}^n$,
and $y_t$ is the measurement or observation at time $t$, taking values in $\mathbb{R}^m$, with $m \leq n$.
Let $t_i \leq \bar{t}_i$ for $i = 1, 2, 3$, denote their corresponding ranges by $t_i\colon\bar{t}_i$,
and denote the distribution of $x_{t_2:\bar{t}_2}$ conditioned on $x_{t_1:\bar{t}_1}$ and $y_{t_3:\bar{t}_3}$ by
\begin{equation}\label{eq:conditonal-state-distributions}
\pi^{t_3:\bar{t}_3}_{t_2:\bar{t}_2 \mid t_1 : \bar{t}_1}( x_{t_2: \bar{t}_2} \mid x_{t_1: \bar{t}_1}, y_{t_3: \bar{t}_3}),
\end{equation}
where the ranges $t_1:\bar{t}_1$ and $t_2:\bar{t}_2$ are non-overlapping.
Similarly, denote the distribution of $y_{t_2:\bar{t}_2}$ conditioned on $x_{t_1:\bar{t}_1}$ and $y_{t_3:\bar{t}_3}$ by
\begin{equation}\label{eq:conditonal-measurement-distributions}
h^{t_3:\bar{t}_3}_{t_2:\bar{t}_2 \mid t_1 : \bar{t}_1}(y_{t_2: \bar{t}_2} \mid x_{t_1: \bar{t}_1}, y_{t_3: \bar{t}_3})
\end{equation}
where the ranges $t_1:\bar{t}_1$ and $t_2:\bar{t}_2$ are non-overlapping.
Lastly, the distribution of $y_{t_2:\bar{t}_2}$ conditioned on $y_{t_3:\bar{t}_3}$ is denoted by
\begin{equation}\label{eq:conditional-marginal-measurement-distributions}
L_{t_2:\bar{t}_2 \mid t_3:\bar{t}_3}(y_{t_2: \bar{t}_2} \mid y_{t_3: \bar{t}_3}),
\end{equation}
where the ranges $t_1:\bar{t}_1$ and $t_2:\bar{t}_2$ are non-overlapping.
Given a sequence of measurements $y_{1:T}$ the marginal likelihood, $L_{1:T}$, and path posterior, $\pi^{1:T}_{0:T}$, are given by marginalization and Bayes' rule:
\begin{subequations}\label{eq:bayes-rule}
\begin{align}
L_{1:T}(y_{1:T}) &= \int \pi_0(x_0) \prod_{t=1}^T h_{t\mid t}(y_t \mid x_t) \pi_{t \mid t-1}(x_t \mid x_{t-1}) \dif x_{0:T} \label{eq:marginal-likelihood}\\
\pi^{1:T}_{0:T}(x_{0:T} \mid y_{1:T} ) &= L_{1:T}^{-1}(y_{1:T}) \pi_0(x_0) \prod_{t=1}^T h_{t \mid t}(y_t \mid x_t) \pi_{t \mid t-1}(x_t \mid x_{t-1}) \label{eq:path-posterior}.
\end{align}
\end{subequations}
As the measurement sequence shall be considered fixed it will henceforth be omitted in the expressions \eqref{eq:conditonal-state-distributions},
\eqref{eq:conditonal-measurement-distributions}, \eqref{eq:conditional-marginal-measurement-distributions}, and \eqref{eq:bayes-rule}.
The problem is thus to efficiently compute $L_{1:T}$ and interesting quantities from $\pi^{1:T}_{0:T}$, such as the smoothing distributions, $\pi^{1:T}_t$,
or the a posteriori transition distributions $\pi^{1:T}_{t \mid s}$.
There are essentially two methods to do this, the \emph{forward-backward} method \citep{Askar1981} and \emph{backward-forward} method \citep{Cappe2005},
a catalogue of different forward-backward and backward-forward methods is provided by \citet{AitElFquih2008}.
The forward-backward method has the advantage that it recursively computes the \emph{filtering distributions} $\pi^{1:t}_t$ forward in time;
it is thus suitable for on-line inference \citep{Kailath2000,Anderson2012,Sarkka2023}.
It can also compute the \emph{backward} transition distributions, $\pi^{1:t}_{t \mid t+1}$,
from which the smoothing distributions can be computed by the Rauch--Tung--Striebel algorithm \citep{Rauch1965}.
On the other hand, the backward-forward method computes the \emph{likelihoods of the future} $h_{t:T \mid t-1}$ backward in time.
It can also compute the \emph{forward} transition distributions, $\pi^{t:T}_{t \mid t-1}$,
from which the smoothing distributions can be computed once $\pi^{1:T}_0$ has been obtained (see theorem \ref{thm:backward-forward} below).
Both methods can also be combined to compute the smoothing distributions via the \emph{two-filter formula} \cite{Bresler1986}
\begin{equation}\label{eq:two-filter}
\pi^{1:T}_t(x) \propto h_{t+1:T\mid t}(x) \pi^{1:t}_t(x) \propto h_{t:T \mid t}(x) \pi^{1:t-1}_t(x)
\end{equation}
The two filter formula has been used to construct algorithms for approximate inference in Markovian switching systems \citep{Helmick1995}, jump Markov linear systems \citep{Balenzuela2022}, navigation problems \citep{Liu2010},
and in sequential Monte Carlo \citep{Sarkka2012,Lindsten2013,Lindsten2015}.
The abstract formulae for computing the likelihoods $h_{t:T\mid t}$, the a posteriori transition distributions $\pi^{t:T}_{t \mid t-1}$,
and the marginal likelihood, $L_{1:T}$ are given by theorem \ref{thm:backward-forward} \citep[chapter 3]{Cappe2005}.

\begin{theorem}[The backward-forward method]\label{thm:backward-forward}
Consider the partially observed Markov process in \eqref{eq:pomp},
then $h_{t:T \mid t}$ satisfy the following recursion:
\begin{subequations}\label{eq:likelihood-recursion}
\begin{align}
h_{t:T \mid t-1}(x_{t-1})   &= \int h_{t:T \mid t}(x_t) \pi_{t \mid t-1}(x_t \mid x_{t-1}) \dif x_t\\
h_{t-1:T \mid t-1}(x_{t-1}) &= h_{t-1\mid t-1}(x_{t-1}) h_{t:T \mid t-1}(x_{t-1}),
\end{align}
\end{subequations}
and the marginal likelihood, $L_{1:T}$ is given by
\begin{equation}
L_{1:T} = \int h_{1:T \mid 0}(x_0) \pi_0(x_0) \dif x_0.
\end{equation}
Additionally, the posterior over paths is given by
\begin{subequations}
\begin{align}
\pi^{1:T}_0(x_0) &= \frac{h_{1:T \mid 0}(x_0) \pi_0(x_0)}{L_{1:T}} \\
\pi^{t:T}_{t \mid t-1}(x_t \mid x_{t-1}) &= \frac{h_{t:T \mid t}(x_t) \pi_{t \mid t-1}(x_t \mid x_{t-1}) }{h_{t:T \mid t-1}(x_{t-1})} \\
\pi^{1:T}_{0:T}(x_{0:T}) &= \pi^{1:T}_0(x_0) \Pi_{t=1}^T  \pi^{t:T}_{t \mid t-1}(x_t \mid x_{t-1}).
\end{align}
\end{subequations}
\end{theorem}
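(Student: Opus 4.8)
The plan is to reduce each claim to the conditional-independence structure of the model \eqref{eq:pomp} together with elementary density manipulations: the chain rule, marginalization, and Bayes' rule. Throughout I suppress the fixed observations and write $h_{t:T\mid t}(x_t)$ for the density of $y_{t:T}$, evaluated at the observed values and conditioned on $x_t$. The structural facts I would record first, by reading them off the joint factorization induced by \eqref{eq:pomp}, are that each $y_t$ depends on the state sequence only through $x_t$ and that the state sequence is Markov; equivalently, conditioned on $x_t$ the future $(x_{t+1:T}, y_{t:T})$ is independent of the past $(x_{0:t-1}, y_{1:t-1})$. Every reduction below is an instance of these standard hidden-Markov independences.

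For the prediction recursion I would write $h_{t:T\mid t-1}(x_{t-1}) = \int p(y_{t:T}, x_t \mid x_{t-1})\,\dif x_t$, factor the integrand as $p(y_{t:T}\mid x_t, x_{t-1})\,\pi_{t\mid t-1}(x_t\mid x_{t-1})$, and collapse $p(y_{t:T}\mid x_t, x_{t-1})$ to $h_{t:T\mid t}(x_t)$ by conditional independence, which yields the first line of \eqref{eq:likelihood-recursion}. For the update recursion I would split $p(y_{t-1:T}\mid x_{t-1}) = p(y_{t-1}\mid x_{t-1})\,p(y_{t:T}\mid y_{t-1}, x_{t-1})$ and use that, given $x_{t-1}$, the observation $y_{t-1}$ is independent of $y_{t:T}$, so that the second factor equals $h_{t:T\mid t-1}(x_{t-1})$. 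The marginal-likelihood formula is then immediate on marginalizing $x_0$ out of the joint, and the Bayes formulas for $\pi^{1:T}_0$ and $\pi^{t:T}_{t\mid t-1}$ follow by applying Bayes' rule and again replacing $p(y_{t:T}\mid x_t, x_{t-1})$ with $h_{t:T\mid t}(x_t)$.

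The one step with genuine content is the forward Markov factorization of the path posterior, which I would prove by substituting the expressions for $\pi^{1:T}_0$ and $\pi^{t:T}_{t\mid t-1}$ into the right-hand side and showing that the product telescopes. The cancellation is driven by rewriting the update recursion as $h_{t:T\mid t}(x_t) = h_{t\mid t}(x_t)\,h_{t+1:T\mid t}(x_t)$: the factor $h_{t+1:T\mid t}(x_t)$ in the numerator of the $t$-th term cancels the denominator $h_{t+1:T\mid t}(x_t)$ of the $(t+1)$-th term, while the terminal factor $h_{T+1:T\mid T}(x_T)$ is an empty product equal to one. What survives is exactly $L_{1:T}^{-1}\,\pi_0(x_0)\prod_{t=1}^T h_{t\mid t}(x_t)\,\pi_{t\mid t-1}(x_t\mid x_{t-1})$, the path posterior \eqref{eq:path-posterior}. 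I expect the main obstacle to be bookkeeping: keeping the index ranges of the likelihoods aligned through the telescoping so that every interior factor cancels and only the intended boundary terms remain. The conditional-independence reductions in the earlier steps, though conceptually indispensable, are routine once the Markov structure is made explicit.
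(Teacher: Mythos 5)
Your proposal is correct, but note that the paper does not actually prove this theorem: it is stated as a known result and attributed to Capp\'e, Moulines, and Ryd\'en (2005, chapter 3), so there is no in-paper proof to compare against. Your argument is the standard derivation from the hidden-Markov conditional independences, and the one step you flag as having genuine content --- the telescoping verification of the forward Markov factorization of the path posterior --- checks out: writing $h_{t:T\mid t}(x_t) = h_{t\mid t}(x_t)\,h_{t+1:T\mid t}(x_t)$, the factor $h_{t+1:T\mid t}(x_t)$ cancels the denominator of the $(t+1)$-th transition term, the $t=1$ denominator $h_{1:T\mid 0}(x_0)$ cancels against the numerator of $\pi^{1:T}_0$, and the empty terminal likelihood equals one, leaving exactly \eqref{eq:path-posterior}. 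No gaps.
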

In this article, the backward-forward algorithm is re-examined when \eqref{eq:pomp} is Gauss--Markov.
More specifically, $\pi_0$, $\pi_{t \mid t-1}$, and $h_{t \mid t}$ are then given by
\begin{subequations}\label{eq:gauss-markov}
\begin{align}
\pi_0(x) &= \mathcal{N}(x; \mu_0, \Sigma_0) \\
\pi_{t \mid t-1}(x \mid z) &= \mathcal{N}(x; \mu_{t \mid t-1}(z), Q_{t, t-1}) \\
h_{t \mid t}(y \mid x) &= \mathcal{N}(y; C_t x, R_t),
\end{align}
\end{subequations}
where the conditional mean function, $\mu_{t \mid t-1}$ is given by
\begin{equation}\label{eq:prior-transition-mean}
\mu_{t\mid t-1}(z) = \Phi_{t, t-1} z + u_{t, t-1}.
\end{equation}
The general idea has been around since \citet{Mayne1966,Fraser1969} and invovles treating $h_{t:T \mid t}$ as a distributon and computing it recursively backward in time
starting from $h_{T\mid T}$ by what is essentially a Kalman filter \citep{Kalman1961}.
The method is related to the Lagrange multiplier method for the maximum a posteriori trajectory \citep{Watanabe1985}.
However, $h_{t:T \mid t}$ is not a distribution, and may not be normalizable on $\mathbb{R}^n$ \citep{Cappe2005}.
This problem is circumvented by propagating an information vector, $\xi^{t:T}_t$, and information matrix, $\Lambda^{t:T}_t$ instead of a mean vector and covariance matrix \citep{Fraser1969}.
The information matrix is related to the covariance matrix by inversion.
More specifically, $h_{t:T \mid t}$ is parametrized according to \citep{Balenzuela2022}
\begin{equation}\label{eq:information-parametrization}
\begin{split}
\log h_{t:T \mid s}(x) &= \log \mathcal{N}\big(x; (\Lambda^{t:T}_{t \mid s})^{-1}\xi^{t:T}_{t \mid s},  (\Lambda^{t:T}_{t \mid s})^{-1}\big) + \texttt{const} \\
&= -\frac{1}{2}\langle x, \Lambda^{t:T}_{t \mid s} x \rangle + \langle \xi^{t:T}_{t \mid s} , x \rangle + \texttt{const}
\end{split}
\end{equation}
where the inverse is strictly formal since $\Lambda^{t:T}_{t \mid s}$ may not be invertible.
As remarked by \citet{Balenzuela2018,Balenzuela2022}, the constant of proportionality, say $\tilde{c}_{t \mid s}$, is often omitted from the computation.
This is troublesome in parameter estimation problems as it is required for computing the marginal likelihood, $L_{1:T}$ (c.f. theorem \ref{thm:backward-forward}).
The backward recursion for the parametrization \eqref{eq:information-parametrization} is essentially a so-called \emph{information filter} is run backwards in time \citep[sec 6.3]{Anderson2012}.
Some formulas in the literature require $\Phi_{t, t-1}$ to be invertible \citep[section 10.4.2]{Kailath2000}.
Other formulas require that $Q_{t, t-1}$ be invertible \citep[section 3.2]{Kitagawa2023}.
These assumptions are restrictive and not even required \citep{Fraser1969,Sarkka2020,Balenzuela2018,Balenzuela2022}.

\subsection{Contribution}
By regarding the backward method as a recursion over likelihoods rather than as a recursion over distributions,
an algorithm is obtained which obviates the need for ``information'' parametrization.
More specifically, $h_{t:T \mid t}$ is parametrized according to
\begin{equation*}
h_{t:T \mid t}(x) = c_{t:T\mid t} \exp \Big[ -\frac{1}{2} \abs[0]{\bar{y}_{t:T\mid t} - \bar{C}_{t:T\mid t} x}^2 \Big].
\end{equation*}
This has the consequence that $h_{t:T\mid t-1}$ and $\pi^{t:T}_{t\mid t-1}$ can be computed by a method that is algebraically equivalent to the classical Kalman update in mean/covariance parametrization.
In particular, the author is not aware of anyone giving a method for computing $\pi^{t:T}_{t\mid t-1}$ from the backwards recursion.
The price to be paid for this convenience is that the formula for computing $h_{t-1:T\mid t-1}$ from $h_{t:T\mid t-1}$ and $h_{t \mid t}$ is slightly more complicated.
More specifically, a QR factorization is required to ensure that the dimensionality $\bar{m}_{t\mid t}$ of $\bar{y}_{t:T\mid t}$ does not grow beyond $n$.
Additionally, a method only operating on covariance square-roots is developed, assuming that the square-roots $R_t^{1/2}$ and $Q_{t, t-1}^{1/2}$ are given.
\footnote{However, $R^{1/2}$ is already required for the first method.}
\footnote{When the estimation problem originates from a continuous-discrete problem, then $Q_{t, t-1}^{1/2}$ may be obtained directly by a recent algorithm \citep{Stillfjord2023}.}

The remainder of this article is organizes as follows.
In section \ref{sec:new-recursion} the forward-backward method in the proposed likelihood parametrization is developed,
in section \ref{sec:cholesky} this is generalized to only operate on Cholesky factors of covariances,
and in section \ref{sec:two-filter} it is shown that the two filter formula becomes algebraically equivalent to the classical Kalman update in the proposed parametrization.
Section \ref{sec:likelihood-as-density} discusses the interpretation of the likelihood as a density and maximum likelihood estimation of the state based on future measurements.
In section \ref{sec:experiment}, the proposed algorithm is demonstrated in a state estimation problem with flat initial distribution,
and a concluding discussion is given in section \ref{sec:discussion}.

\section{The new backward-forward recursion}\label{sec:new-recursion}
In this section, a method for implementing theorem \ref{thm:backward-forward} is derived.
The novelity lies in that $h_{t:T \mid s}$ for $s = t-1, t$ is identified with a linear Gaussian likelihood with identity covariance.
Consequently, the likelihood recursion \eqref{eq:likelihood-recursion} reduces to a recursion over a measurement vector $\bar{y}_{t:T\mid s}$ and a measurement matrix $\bar{C}_{t:T \mid s}$.
The following assumption is not essential to carry out this program, but simplifies the exposition greatly.

\begin{assumption}\label{ass:psd-measurement-covariance}
The measurement covariance matrices $\{R_t\}_{t=1}^T$ are positive definite.
\end{assumption}

\subsection{The backward recursion}
Let $y$ be a vector and $C$ and $R$ matrices of appropriate dimensions.
Denote by $\mathcal{Q}(x; y, C, R)$ the following quadratic function:
\begin{equation}\label{eq:quadratic-form}
\mathcal{Q}(x; y, C, R) = (y - C x)^* R^{-1} (y - C x).
\end{equation}
When $R$ is the identity matrix, $R = \mathrm{I}$, it is omitted from the notation like so:
\begin{equation}
\mathcal{Q}(x; y, C, \mathrm{I}) = \mathcal{Q}(x; y, C).
\end{equation}
Note that when $R$ is positive definite then \eqref{eq:quadratic-form} is overparametrized in the sense that
$R$ has a unique Cholesky factorization $R = R^{1/2} R^{*/2}$, therefore
\footnote{
Note that $\mathcal{Q}(x; y, C)$ is still overparametrized for general $C$.
By the QR factorization, $C = Q_C R_C$ so $\mathcal{Q}(x; y, C) = \mathcal{Q}(x; Q_C^* y, R_C)$ since $Q_C$ is unitary.
This is to some extent taken care of by proposition \ref{prop:big-likelihood-update}.
}
\begin{equation}\label{eq:white-quadratic-form}
\begin{split}
\mathcal{Q}(x; y, C, R) &=  (y - C x)^* R^{-1} (y - C x) = \abs[0]{ R^{-1/2}y - R^{-1/2} C x}^2 \\
&= \mathcal{Q}(x; R^{-1/2}y, R^{-1/2} C).
\end{split}
\end{equation}
This section is dedicated to proving the following theorem and in the process deriving a recursion for the parameters in \eqref{eq:posterior-transition-expression} and \eqref{eq:likelihood-expression}.
The method of proof is by induction.
\begin{theorem}\label{thm:gauss-markov-likelihood}
Let assumption \ref{ass:psd-measurement-covariance} be in effect,
then $\pi^{t:T}_{t\mid t-1}$ and $h_{t:T \mid s}$ for  $s = t-1, t$ and $t \leq T$ have the following representation:
\begin{subequations}
\begin{align}
\mu^{t:T}_{t \mid t-1}(z)         &=  \Phi^{t:T}_{t, t-1} z + u^{t:T}_{t, t-1} \\
\pi^{t:T}_{t \mid t-1}(x \mid z ) &= \mathcal{N}(x; \mu^{t:T}_{t \mid t-1}(z) , Q^{t:T}_{t, t-1}) \label{eq:posterior-transition-expression} \\
h_{t:T\mid s}(x) &= c_{t:T\mid s} \exp \Big[ -\frac{1}{2} \mathcal{Q}(x; \bar{y}_{t:T\mid s}, \bar{C}_{t:T \mid s}) \Big] \label{eq:likelihood-expression},
\end{align}
\end{subequations}
where  $\bar{y}_{t:T\mid s} \in \mathbb{R}^{\bar{m}_{t:T\mid s}}$ and $\bar{C}_{t:T\mid s} \in \mathbb{R}^{\bar{m}_{t:T\mid s} \times n}$ and $m \leq \bar{m}_{t:T\mid s} \leq n$.
\end{theorem}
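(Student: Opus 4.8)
The plan is to prove the representation by backward induction on $t$, running from $t=T$ down to $t=1$ and alternating the two steps of the likelihood recursion \eqref{eq:likelihood-recursion} together with the Bayes formula for $\pi^{t:T}_{t\mid t-1}$ from theorem \ref{thm:backward-forward}. For the base case $t=T$ one has $h_{T:T\mid T}=h_{T\mid T}$, which by \eqref{eq:gauss-markov} equals $\mathcal{N}(y_T; C_T x, R_T)$ viewed as a function of $x$. Since assumption \ref{ass:psd-measurement-covariance} guarantees that $R_T^{1/2}$ and its inverse exist, the whitening identity \eqref{eq:white-quadratic-form} puts this immediately in the form \eqref{eq:likelihood-expression} with $\bar{C}_{T:T\mid T}=R_T^{-1/2}C_T$, $\bar{y}_{T:T\mid T}=R_T^{-1/2}y_T$, and constant equal to the Gaussian normaliser; here $\bar{m}_{T:T\mid T}=m$, so the dimension bounds hold at the base.

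For the inductive step, I would first treat the \emph{prediction} map $h_{t:T\mid t}\mapsto h_{t:T\mid t-1}$ of \eqref{eq:likelihood-recursion}. The key observation is that the inductive form \eqref{eq:likelihood-expression} for $h_{t:T\mid t}$ is exactly a linear-Gaussian likelihood for a pseudo-measurement $\bar{y}_{t:T\mid t}$ with measurement matrix $\bar{C}_{t:T\mid t}$ and \emph{identity} noise covariance, while $\pi_{t\mid t-1}$ is a Gaussian prior on $x_t$ with mean $\Phi_{t,t-1}z+u_{t,t-1}$ and covariance $Q_{t,t-1}$. Consequently the marginalisation $\int h_{t:T\mid t}(x_t)\pi_{t\mid t-1}(x_t\mid z)\dif x_t$ is algebraically the evidence of a single Kalman update: it is a Gaussian in the innovation with covariance $S=\bar{C}_{t:T\mid t}Q_{t,t-1}\bar{C}_{t:T\mid t}^{*}+\mathrm{I}$. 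Because of the identity term, $S\succeq\mathrm{I}\succ 0$ is invertible, so I may factor $S^{-1}=S^{-1/2}S^{-*/2}$ and, again via \eqref{eq:white-quadratic-form}, write $h_{t:T\mid t-1}$ in the form \eqref{eq:likelihood-expression} with $\bar{C}_{t:T\mid t-1}=S^{-1/2}\bar{C}_{t:T\mid t}\Phi_{t,t-1}$ and $\bar{y}_{t:T\mid t-1}=S^{-1/2}(\bar{y}_{t:T\mid t}-\bar{C}_{t:T\mid t}u_{t,t-1})$; this preserves the row count, $\bar{m}_{t:T\mid t-1}=\bar{m}_{t:T\mid t}$. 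Applying the Bayes formula from theorem \ref{thm:backward-forward} to the same two factors yields the posterior transition as the Gaussian conditional of $x_t$ given the pseudo-measurement and $z$; its mean is the Kalman-corrected mean, affine in $z$ through the gain $K=Q_{t,t-1}\bar{C}_{t:T\mid t}^{*}S^{-1}$, and its covariance $Q^{t:T}_{t,t-1}$ is independent of $z$, establishing \eqref{eq:posterior-transition-expression}.

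Next I would treat the \emph{update} $h_{t-1:T\mid t-1}=h_{t-1\mid t-1}\,h_{t:T\mid t-1}$. After whitening $h_{t-1\mid t-1}$ by $R_{t-1}^{-1/2}$, the product is $\exp$ of minus one half the \emph{sum} of two quadratic forms $\mathcal{Q}$, which is the single quadratic form obtained by vertically stacking the two pseudo-measurement vectors and the two measurement matrices. This stacking raises the row count to $\bar{m}_{t:T\mid t-1}+m$, which may exceed $n$. To control it I would apply a QR factorisation to the stacked $(\bar{m}_{t:T\mid t-1}+m)\times n$ matrix and use the unitary invariance of $\mathcal{Q}$ noted in the footnote to \eqref{eq:white-quadratic-form}: rotating by the unitary factor leaves an upper-triangular matrix with at most $n$ nonzero rows, and the residual contributed by the zero rows is constant in $x$ and is absorbed into $c_{t-1:T\mid t-1}$. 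This produces \eqref{eq:likelihood-expression} for $h_{t-1:T\mid t-1}$ with $\bar{m}_{t-1:T\mid t-1}=\min(\bar{m}_{t:T\mid t-1}+m,\,n)$.

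The dimension bounds then follow by tracking $\bar{m}$ along the recursion: it starts at $m$ at the base, is preserved by prediction, and is mapped by $\bar{m}\mapsto\min(\bar{m}+m,n)$ by the update. Hence $\bar{m}_{t:T\mid s}$ is nondecreasing from $m$ and never exceeds $n$, giving $m\le\bar{m}_{t:T\mid s}\le n$. I expect the main obstacle to be precisely this dimension control in the update step: the naive stacking grows the representation by $m$ at every time step, and it is the QR truncation — legitimate only because of the identity-covariance (whitened) parametrisation and the resulting invertibility $S\succ0$ guaranteed by assumption \ref{ass:psd-measurement-covariance} — that keeps the parametrisation finite and of size at most $n$. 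By contrast, the prediction step is a routine, if careful, completion of the square identical to the classical Kalman update, which is exactly the convenience the proposed parametrisation is designed to expose.
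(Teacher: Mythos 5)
Your proposal is correct and follows essentially the same route as the paper: backward induction with base case $h_{T\mid T}$ whitened by $R_T^{-1/2}$, a prediction step realised as a Kalman update with innovation covariance $\mathrm{I}+\bar{C}_{t:T\mid t}Q_{t,t-1}\bar{C}_{t:T\mid t}^{*}\succ 0$ (the paper's proposition \ref{prop:likelihood-prediction}), and an update step by stacking followed by QR truncation of the rows beyond $n$ with the residual absorbed into the constant (the paper's propositions \ref{prop:small-likelihood-update} and \ref{prop:big-likelihood-update}, which you merge into the single formula $\bar{m}\mapsto\min(\bar{m}+m,n)$). The only cosmetic difference is that the paper states the underdetermined and overdetermined update cases as two separate propositions rather than one.
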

The formula \eqref{eq:likelihood-expression} clearly hold when $T = t = s$, since by  assumption \ref{ass:psd-measurement-covariance} and \eqref{eq:white-quadratic-form}
\begin{equation*}
\begin{split}
h_{t \mid t}(x) &=  \mathcal{N}(y_t; C_t x, R_t) = \abs{2\pi R_t}^{-1/2} \exp \Big[ - \frac{1}{2} \mathcal{Q}(x; y_t, C_t, R_t) \Big] \\
&= \abs{2\pi R_t}^{-1/2} \exp \Big[ - \frac{1}{2} \mathcal{Q}(x; R_t^{-1/2} y_t, R_t^{-1/2} C_t) \Big]
 = c_{t\mid t} \exp \Big[ - \frac{1}{2} \mathcal{Q}(x; \bar{y}_{t\mid t}, \bar{C}_{t\mid t}) \Big],
\end{split}
\end{equation*}
where the parameters are given by
\begin{subequations}\label{eq:white-likelihoods}
\begin{align}
c_{t\mid t} &= \abs{2\pi R_t}^{-1/2}, \\
\bar{y}_{t\mid t} &= R_t^{-1/2} y_t, \\
\bar{C}_{t\mid t} &= R_t^{-1/2} C_t.
\end{align}
\end{subequations}
In particular that means that the terminal condition to the the recurrence in theorem \ref{thm:backward-forward}, $h_{T:T\mid T} = h_{T \mid T}$, satisfies \eqref{eq:likelihood-expression}.
The following proposition shows that \eqref{eq:posterior-transition-expression} and \eqref{eq:likelihood-expression} hold for $s = t-1$ provided that \eqref{eq:likelihood-expression} hold for $s = t$.
\begin{proposition}\label{prop:likelihood-prediction}
Assume that the likelihood, $h_{t:T \mid t}$, satisfies \eqref{eq:likelihood-expression},
then the likelihood, $h_{t:T \mid t-1}$, and the transition distribution, $\pi^{t:T}_{t \mid t-1}$,
satisfies \eqref{eq:likelihood-expression} and \eqref{eq:posterior-transition-expression}, respectively, with parameters:
\begin{subequations}\label{eq:likelihood-prediction}
\begin{align}
\hat{R}_{t:T \mid t-1} &= \mathrm{I} + \bar{C}_{t:T \mid t} Q_{t, t-1} \bar{C}_{t:T \mid t}^*, \\
\bar{y}_{t:T\mid t-1}  &=\hat{R}_{t:T \mid t-1}^{-1/2} \big( \bar{y}_{t:T\mid t} - \bar{C}_{t:T \mid t} u_{t, t-1} \big)  \\
\bar{C}_{t:T \mid t-1} &=  \hat{R}_{t:T \mid t-1}^{-1/2}\bar{C}_{t:T \mid t} \Phi_{t, t-1} \\
\log c_{t:T \mid t-1}  &= \log c_{t:T \mid t-1}  - \frac{1}{2} \log \abs[0]{\hat{R}_{t:T\mid t-1}} \\
\bar{K}^{t:T}_{t \mid t-1} &= Q_{t, t-1} \bar{C}_{t:T \mid t}^* \hat{R}_{t:T\mid t-1}^{-1} \\
\Phi^{t:T}_{t, t-1} &= \big( \mathrm{I} - \bar{K}^{t:T}_{t \mid t-1} \bar{C}_{t:T \mid t} \big) \Phi_{t, t-1} \\
u^{t:T}_{t, t-1}    &= u_{t, t-1} + \bar{K}^{t:T}_{t \mid t-1} \big( \bar{y}_{t:T\mid t} - \bar{C}_{t:T \mid t} u_{t, t-1}\big)  \\
Q^{t:T}_{t, t-1}    &= Q_{t, t-1} - \bar{K}^{t:T}_{t \mid t-1} \hat{R}_{t:T \mid t-1} (\bar{K}^{t:T}_{t \mid t-1})^*,
\end{align}
\end{subequations}
where $\bar{y}_{t:T\mid t-1} \in \mathbb{R}^{\bar{m}_{t:T\mid t-1}}$, $\bar{C}_{t:T\mid s} \in \mathbb{R}^{\bar{m}_{t:T\mid t-1} \times n}$, and  $\hat{R}_{t:T\mid s} \in \mathbb{R}^{\bar{m}_{t:T\mid t-1} \times n}$
with
\begin{equation}
\bar{m}_{t:T\mid t-1} = \bar{m}_{t:T\mid t}.
\end{equation}
\end{proposition}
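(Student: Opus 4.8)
The plan is to recognize the prediction step as ordinary affine Gaussian conditioning, i.e.\ exactly the algebra of a single Kalman update, applied to the artificial ``measurement'' $\bar{y}_{t:T\mid t}$ of $x_t$ with identity noise covariance. First I would rewrite the inductive hypothesis \eqref{eq:likelihood-expression} for $s = t$ as a scaled Gaussian density in $\bar{y}_{t:T\mid t}$,
\[
h_{t:T\mid t}(x) = c_{t:T\mid t}\,(2\pi)^{\bar{m}_{t:T\mid t}/2}\,\mathcal{N}\big(\bar{y}_{t:T\mid t};\,\bar{C}_{t:T\mid t}\,x,\,\mathrm{I}\big),
\]
so that the integrand of the first line of \eqref{eq:likelihood-recursion} becomes the product $\mathcal{N}(\bar{y}_{t:T\mid t}; \bar{C}_{t:T\mid t}x_t, \mathrm{I})\,\mathcal{N}(x_t; \mu_{t\mid t-1}(x_{t-1}), Q_{t,t-1})$, with $x_{t-1}$ held fixed as a parameter through $\mu_{t\mid t-1}(x_{t-1}) = \Phi_{t,t-1}x_{t-1} + u_{t,t-1}$.

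The core step is the affine-Gaussian product identity
\[
\mathcal{N}(\bar{y}; \bar{C}x, \mathrm{I})\,\mathcal{N}(x; m, Q) = \mathcal{N}(\bar{y}; \bar{C}m, \hat{R})\,\mathcal{N}\big(x; m + K(\bar{y} - \bar{C}m),\, Q - K\hat{R}K^*\big),
\]
with $\hat{R} = \mathrm{I} + \bar{C}Q\bar{C}^*$ and $K = Q\bar{C}^*\hat{R}^{-1}$. I would apply it with $m = \mu_{t\mid t-1}(x_{t-1})$, $Q = Q_{t,t-1}$, $\bar{C} = \bar{C}_{t:T\mid t}$ and $\bar{y} = \bar{y}_{t:T\mid t}$; this identifies $\hat{R}_{t:T\mid t-1}$ and $\bar{K}^{t:T}_{t\mid t-1}$ directly. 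Integrating over $x_t$ makes the conditional factor vanish, so by theorem \ref{thm:backward-forward} the marginal factor, rescaled by $c_{t:T\mid t}(2\pi)^{\bar{m}_{t:T\mid t}/2}$, equals $h_{t:T\mid t-1}$, while the surviving conditional factor is precisely $\pi^{t:T}_{t\mid t-1}(\cdotp\mid x_{t-1})$.

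It then remains to read off the parameters. Expanding $\mathcal{N}(\bar{y}_{t:T\mid t}; \bar{C}_{t:T\mid t}(\Phi_{t,t-1}x_{t-1} + u_{t,t-1}), \hat{R}_{t:T\mid t-1})$ into its normalizing constant and exponent produces the factor $|\hat{R}_{t:T\mid t-1}|^{-1/2}$, hence the update for $\log c_{t:T\mid t-1}$, together with the quadratic $\mathcal{Q}(x_{t-1}; \bar{y}_{t:T\mid t} - \bar{C}_{t:T\mid t}u_{t,t-1}, \bar{C}_{t:T\mid t}\Phi_{t,t-1}, \hat{R}_{t:T\mid t-1})$, which the whitening identity \eqref{eq:white-quadratic-form} converts into the claimed $\mathcal{Q}(x_{t-1}; \bar{y}_{t:T\mid t-1}, \bar{C}_{t:T\mid t-1})$ with $\bar{y}_{t:T\mid t-1} = \hat{R}_{t:T\mid t-1}^{-1/2}(\bar{y}_{t:T\mid t} - \bar{C}_{t:T\mid t}u_{t,t-1})$ and $\bar{C}_{t:T\mid t-1} = \hat{R}_{t:T\mid t-1}^{-1/2}\bar{C}_{t:T\mid t}\Phi_{t,t-1}$. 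Splitting the conditional mean $m + K(\bar{y} - \bar{C}m)$ into its part linear in $x_{t-1}$ and its constant part yields $\Phi^{t:T}_{t,t-1} = (\mathrm{I} - \bar{K}^{t:T}_{t\mid t-1}\bar{C}_{t:T\mid t})\Phi_{t,t-1}$ and $u^{t:T}_{t,t-1}$, while the conditional covariance gives $Q^{t:T}_{t,t-1}$. The dimension claim is then immediate: $\hat{R}_{t:T\mid t-1}^{-1/2}$ is a square factor, so $\bar{y}_{t:T\mid t-1}$ inherits the length $\bar{m}_{t:T\mid t}$, whence $\bar{m}_{t:T\mid t-1} = \bar{m}_{t:T\mid t}$.

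I do not expect a genuine obstacle here, since the construction never inverts $\Phi_{t,t-1}$ or $Q_{t,t-1}$, and $\hat{R}_{t:T\mid t-1} = \mathrm{I} + \bar{C}_{t:T\mid t}Q_{t,t-1}\bar{C}_{t:T\mid t}^*$ is positive definite by construction, so its symmetric square root and inverse always exist. The points demanding care are purely bookkeeping: correctly tracking the normalization constant $c$ (whose omission is exactly the pitfall flagged after \eqref{eq:information-parametrization}), and performing the whitening \eqref{eq:white-quadratic-form} so that the predicted likelihood is returned in the identity-covariance form \eqref{eq:likelihood-expression} rather than as a general quadratic $\mathcal{Q}(\cdotp;\cdotp,\cdotp,\hat{R})$.
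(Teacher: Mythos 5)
Your proposal is correct and follows essentially the same route as the paper: both recast $h_{t:T\mid t}$ as a scaled Gaussian density in the pseudo-measurement $\bar{y}_{t:T\mid t}$, apply the standard affine-Gaussian conditioning/marginalization identity (the paper cites Särkkä's lemmas A.2--A.3 where you write the product identity out explicitly) conditional on $x_{t-1}$, and then whiten the resulting marginal via \eqref{eq:white-quadratic-form} while tracking the constant $\lvert\hat{R}_{t:T\mid t-1}\rvert^{-1/2}$. No substantive difference in approach.
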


\begin{proof}
The likelihood, $h_{t:T \mid t}$, is by assumption given by
\begin{equation*}
h_{t:T \mid t}(x) = c_{t:T \mid t} \exp\Big[ -\frac{1}{2} \mathcal{Q}(x; \bar{y}_{t:T\mid t}, \bar{C}_{t:T\mid t}) \Big] =   c_{t:T \mid t} (2\pi)^{\bar{m}_{t \mid t}} \mathcal{N}(\bar{y}_{t:T\mid t}; \bar{C}_{t:T\mid t} x, \mathrm{I}).
\end{equation*}
The computation of $h_{t:T \mid t-1}$ and $\pi^{t:T}_{t\mid t-1}$ is then by by theorem \ref{thm:backward-forward}, up to a multiplicative constant, equivalent to inference and marginalization in the following model
\begin{align*}
x_t \mid x_{t-1} & \sim \mathcal{N}( \mu_{t \mid t-1}(x_{t-1}), Q_{t, t-1}) \\
\bar{y}_{t:T \mid t} \mid x_t, x_{t-1} & \sim \mathcal{N}(\bar{y}_{t:T\mid t}; \bar{C}_{t:T\mid t} x_t, \mathrm{I}),
\end{align*}
conditionally on $x_{t-1}$.
The solution is given by \citep[c.f. lemma A.2 and A.3]{Sarkka2023}
\begin{align*}
x_t \mid x_{t-1}, \bar{y}_{t:T \mid t} &\sim \mathcal{N}( \mu^{t:T}_{t\mid t-1}(x_{t-1}) , Q^{t:T}_{t, t-1}) \\
\bar{y}_{t:T \mid t} \mid x_{t-1} & \sim \mathcal{N}\big( \bar{C}_{t:T\mid t} \big(\Phi_{t, t-1} x_{t-1} + u_{t, t-1} \big), \hat{R}_{t:T \mid t-1}\big),
\end{align*}
with the conditional mean function given by
\begin{equation*}
\begin{split}
\mu^{t:T}_{t\mid t-1}(x) &= \mu_{t\mid t-1}(x) + \bar{K}^{t:T}_{t \mid t-1}\big( \bar{y}_{t:T\mid t} - \bar{C}_{t:T \mid t} \mu_{t, t-1}(x)\big) \\
&= \Phi^{t:T}_{t, t-1} x + u^{t:T}_{t, t-1}.
\end{split}
\end{equation*}
The last equality is obtained by using \eqref{eq:prior-transition-mean} and simply matching terms.
Furthermore, re-introducing the constant of proportionality gives $h_{t:T\mid t-1}$ as
\begin{equation*}
\begin{split}
h_{t:T \mid t-1}(x) &=   c_{t:T \mid t} (2\pi)^{\bar{m}_{t \mid t}} \mathcal{N}\big(\bar{y}_{t:T \mid t}; \bar{C}_{t:T\mid t} \big(\Phi_{t, t-1} x + u_{t, t-1} \big), \hat{R}_{t:T \mid t-1}\big) \\
&=  c_{t:T \mid t} (2\pi)^{\bar{m}_{t \mid t}} \mathcal{N}(\bar{y}_{t:T \mid t-1}; \bar{C}_{t:T\mid t-1} x , \mathrm{I}) \\
&= c_{t:T \mid t} \abs[0]{\hat{R}_{t:T\mid t-1}}^{-1/2} \exp \Big[ -\frac{1}{2} \mathcal{Q}(x; \bar{y}_{t:T\mid t-1}, \bar{C}_{t:T\mid t-1}) \Big],
\end{split}
\end{equation*}
which concludes the proof.
\end{proof}
With proposition \ref{prop:likelihood-prediction} established, it remains to show that if $h_{t:T \mid t-1}$ satisfies \eqref{eq:likelihood-expression},
then so does $h_{t-1:T\mid t-1}$. Under this presupposition, then by theorem \ref{thm:backward-forward} and \eqref{eq:white-likelihoods},
$h_{t-1:T\mid t-1}$ is given by
\begin{equation}\label{eq:likelihood-update-expression}
\begin{split}
h_{t-1:T\mid t-1}(x) &= h_{t:T\mid t-1}(x) h_{t-1 \mid t-1}(x) \\
&=  c_{t:T\mid t-1}  \exp \Big[ -\frac{1}{2} \mathcal{Q}(x; \bar{y}_{t:T\mid t-1}, \bar{C}_{t:T \mid t-1}) \Big] \\
&\quad \times c_{t - 1\mid t - 1} \exp \Big[ -\frac{1}{2} \mathcal{Q}(x; \bar{y}_{t-1 \mid t-1}, \bar{C}_{t-1 \mid t-1}) \Big]
\end{split}
\end{equation}
Now let $\hat{m}_{t-1} = \bar{m}_{t:T\mid t-1} + m$ and define $\hat{y}_{t:T\mid t-1} \in \mathbb{R}^{\hat{m}_{t-1}}$ and $\hat{C}_{t:T\mid t-1} \in \mathbb{R}^{\hat{m}_{t-1} \times n}$ by
\begin{subequations}\label{eq:maybe-likelihood-update}
\begin{align}
\hat{y}_{t-1:T\mid t-1} &= \begin{pmatrix} \bar{y}_{t:T \mid t-1} \\ \bar{y}_{t-1 \mid t-1} \end{pmatrix} \\
\hat{C}_{t-1:T\mid t-1} &= \begin{pmatrix} \bar{C}_{t:T \mid t-1} \\ \bar{C}_{t-1 \mid t-1} \end{pmatrix},
\end{align}
\end{subequations}
then the sum of quadratic functions in \eqref{eq:likelihood-update-expression} evalutes to
\begin{equation*}
\mathcal{Q}(x; \bar{y}_{t:T\mid t-1}, \bar{C}_{t:T \mid t-1})  + \mathcal{Q}(x; \bar{y}_{t-1 \mid t-1}, \bar{C}_{t-1 \mid t-1}) = \mathcal{Q}(x; \hat{y}_{t-1:T\mid t-1}, \hat{C}_{t-1:T\mid t-1}).
\end{equation*}
If $\hat{m}_{t-1} \leq n$ this immediately gives an expression for $h_{t-1:T\mid t-1}$ which satisfies the statement of theorem \ref{thm:backward-forward}.
The following proposition summarizes this observation.
\begin{proposition}\label{prop:small-likelihood-update}
Assume $h_{t:T \mid t-1}$ satisfies \eqref{eq:likelihood-expression} and that $\hat{m}_{t-1} = \bar{m}_{t:T\mid t-1} + m \leq n$,
then $h_{t-1:T \mid t-1}$ satisfies \eqref{eq:likelihood-expression} with parameters
\begin{align}
\bar{y}_{t-1:T\mid t-1} &= \hat{y}_{t-1:T\mid t-1} \\
\bar{C}_{t-1:T\mid t-1} &= \hat{C}_{t-1:T\mid t-1} \\
c_{t-1:T\mid t-1} &= c_{t:T\mid t-1} c_{t \mid t}, \\
m_{t-1:T\mid t-1} &= \bar{m}_{t:T\mid t-1} + m.
\end{align}
\end{proposition}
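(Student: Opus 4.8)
The plan is to assemble the proposition directly from the product recursion of theorem \ref{thm:backward-forward} together with the quadratic-form stacking identity already recorded just above the statement; no new analysis is required beyond checking that the resulting parameters obey the dimension bounds of theorem \ref{thm:gauss-markov-likelihood}. First I would substitute the two assumed log-quadratic representations into the update $h_{t-1:T\mid t-1} = h_{t-1\mid t-1}\, h_{t:T\mid t-1}$. The factor $h_{t:T\mid t-1}$ is furnished by the hypothesis in the form \eqref{eq:likelihood-expression}, while $h_{t-1\mid t-1}$ is the single-step likelihood whitened as in \eqref{eq:white-likelihoods}. Multiplying the two exponentials collapses the product into $c_{t:T\mid t-1} c_{t\mid t}$ times $\exp\bigl[-\tfrac12\bigl(\mathcal{Q}(x;\bar y_{t:T\mid t-1},\bar C_{t:T\mid t-1}) + \mathcal{Q}(x;\bar y_{t-1\mid t-1},\bar C_{t-1\mid t-1})\bigr)\bigr]$, which is exactly the displayed expression \eqref{eq:likelihood-update-expression}.

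The key algebraic step is the observation that a sum of two identity-covariance quadratic forms is again an identity-covariance quadratic form obtained by vertical stacking: since $\mathcal{Q}(x;y,C)=\abs[0]{y-Cx}^2$, the sum of the squared residuals equals the squared norm of the stacked residual, so that $\mathcal{Q}(x;\bar y_{t:T\mid t-1},\bar C_{t:T\mid t-1}) + \mathcal{Q}(x;\bar y_{t-1\mid t-1},\bar C_{t-1\mid t-1}) = \mathcal{Q}(x;\hat y_{t-1:T\mid t-1},\hat C_{t-1:T\mid t-1})$ with $\hat y$ and $\hat C$ the block quantities defined in \eqref{eq:maybe-likelihood-update}. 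Setting $\bar y_{t-1:T\mid t-1} = \hat y_{t-1:T\mid t-1}$ and $\bar C_{t-1:T\mid t-1} = \hat C_{t-1:T\mid t-1}$ then places $h_{t-1:T\mid t-1}$ into the form \eqref{eq:likelihood-expression} with normalizing constant $c_{t:T\mid t-1}c_{t\mid t}$, giving all four claimed parameter identities at once.

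The only point that needs care — and the reason the hypothesis $\hat m_{t-1}\le n$ is imposed — is the dimension bookkeeping. By construction the stacked measurement vector has length $\bar m_{t:T\mid t-1}+m$, whence $\bar m_{t-1:T\mid t-1}=\bar m_{t:T\mid t-1}+m$; this is automatically at least $m$, and the hypothesis $\hat m_{t-1}\le n$ is precisely what guarantees $\bar m_{t-1:T\mid t-1}\le n$, keeping $\bar C_{t-1:T\mid t-1}$ within the $\bar m\times n$ class required by theorem \ref{thm:gauss-markov-likelihood}. I expect this dimension check to be the only genuine obstacle: when $\hat m_{t-1}>n$ the naive concatenation overshoots the ambient dimension and a rank-reducing QR step (proposition \ref{prop:big-likelihood-update}) is needed instead, so the present proposition is essentially the statement that in the benign regime $\hat m_{t-1}\le n$ the likelihood update is nothing more than stacking.
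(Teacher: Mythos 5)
Your proposal is correct and follows essentially the same route as the paper: the paper's justification for this proposition is precisely the derivation given in the text immediately preceding it, namely substituting the two assumed representations into the product recursion of theorem \ref{thm:backward-forward}, collapsing the sum of identity-covariance quadratic forms into the stacked form \eqref{eq:maybe-likelihood-update}, and reading off the parameters, with the hypothesis $\hat{m}_{t-1}\le n$ serving only to keep the stacked matrix within the dimension class required by theorem \ref{thm:gauss-markov-likelihood}. Nothing is missing.
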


The overdetermined case, when $\bar{m}_{t:T\mid t-1} + m > n$, requires slightly more effort.

\begin{proposition}\label{prop:big-likelihood-update}
Assume $h_{t:T \mid t-1}$ satisfies \eqref{eq:likelihood-expression} and $\hat{m}_{t-1} =  \bar{m}_{t:T\mid t-1} + m > n$.
Let $V_t U_t$ be a QR factorization of $\hat{C}_{t-1:T\mid t-1}$ and partition the factors according to
\begin{equation}
\hat{C}_{t-1:T\mid t-1} = V_t U_t = \begin{pmatrix} V_{t,1} & V_{t,2} \end{pmatrix} \begin{pmatrix} \bar{C}_{t-1:T\mid t-1} \\ 0_{(\hat{m}_{t-1} - n) \times n} \end{pmatrix},
\end{equation}
where $V_{t, 1} \in \mathbb{R}^{\hat{m}_{t-1} \times n}$, $V_{t, 2} \in \mathbb{R}^{\hat{m}_{t-1} \times (\hat{m}_{t-1} - n)}$, and $\bar{C}_{t-1:T\mid t-1} \in \mathbb{R}^{n \times n}$.
Furthermore, define $\bar{y}_{t-1:T\mid t-1}$ and $e_{t-1}$ by
\begin{align}
\bar{y}_{t-1:T\mid t-1} &= V_{t, 1}^* \hat{y}_{t-1:T\mid t-1} \\
e_{t-1} &= V_{t, 2}^* \hat{y}_{t-1:T\mid t-1},
\end{align}
then $h_{t-1:T\mid t-1}$ satisfies \eqref{eq:likelihood-expression} with remaining parameters given by
\begin{align}
c_{t-1:T\mid t-1} &= c_{t:T\mid t-1} c_{t \mid t} \exp\Big[ -\frac{ \abs[0]{e_{t-1}}^2 }{2} \Big] \\
\bar{m}_{t-1:T\mid t-1} &= n.
\end{align}
\end{proposition}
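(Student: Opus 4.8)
The plan is to collapse the overdetermined quadratic form appearing in \eqref{eq:likelihood-update-expression} down to an $n$-dimensional one, exploiting that $\mathcal{Q}(x; y, C) = \abs[0]{y - Cx}^2$ is a squared Euclidean norm and is therefore invariant under the orthogonal transformation furnished by the QR factorization in the statement. First I would assemble the two factors into a single likelihood: by the sum-of-quadratics identity established in the discussion preceding proposition \ref{prop:small-likelihood-update}, together with the stacking \eqref{eq:maybe-likelihood-update}, the product $h_{t:T\mid t-1} h_{t-1 \mid t-1}$ has exponent $-\tfrac{1}{2}\mathcal{Q}(x; \hat{y}_{t-1:T\mid t-1}, \hat{C}_{t-1:T\mid t-1})$ and multiplicative constant $c_{t:T\mid t-1}\, c_{t \mid t}$.

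The crux is then purely a computation with the full QR factorization. Since $V_t$ is unitary, $V_t^* V_t = \mathrm{I}$ and left-multiplication by $V_t^*$ preserves the norm, so
\begin{equation*}
\mathcal{Q}(x; \hat{y}_{t-1:T\mid t-1}, \hat{C}_{t-1:T\mid t-1}) = \abs[0]{V_t^* \hat{y}_{t-1:T\mid t-1} - V_t^* \hat{C}_{t-1:T\mid t-1}\, x}^2 .
\end{equation*}
Using $\hat{C}_{t-1:T\mid t-1} = V_t U_t$ gives $V_t^* \hat{C}_{t-1:T\mid t-1} = U_t$, whose lower block vanishes by construction, while by definition $V_t^* \hat{y}_{t-1:T\mid t-1}$ has top block $\bar{y}_{t-1:T\mid t-1}$ and bottom block $e_{t-1}$. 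Substituting the block partition of $U_t$ from the statement, the zero block annihilates the contribution of $e_{t-1}$ to the $x$-linear part, so the squared norm separates additively,
\begin{equation*}
\mathcal{Q}(x; \hat{y}_{t-1:T\mid t-1}, \hat{C}_{t-1:T\mid t-1}) = \mathcal{Q}(x; \bar{y}_{t-1:T\mid t-1}, \bar{C}_{t-1:T\mid t-1}) + \abs[0]{e_{t-1}}^2 ,
\end{equation*}
where the first term depends on $x$ and the second does not.

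Finally I would absorb the $x$-free term $\abs[0]{e_{t-1}}^2$ into the multiplicative constant, which delivers $h_{t-1:T\mid t-1}$ in the form \eqref{eq:likelihood-expression} with $c_{t-1:T\mid t-1} = c_{t:T\mid t-1}\, c_{t \mid t} \exp[-\abs[0]{e_{t-1}}^2/2]$ and with the $n \times n$ triangular factor $\bar{C}_{t-1:T\mid t-1}$ as the new measurement matrix, so that $\bar{m}_{t-1:T\mid t-1} = n$. I do not anticipate a genuine obstacle; the only care needed is dimensional bookkeeping. The hypothesis $\hat{m}_{t-1} = \bar{m}_{t:T\mid t-1} + m > n$ ensures that $\hat{C}_{t-1:T\mid t-1}$ is strictly tall, so its full QR factorization yields a genuine $n \times n$ top factor together with exactly $\hat{m}_{t-1} - n$ residual rows collected in $e_{t-1}$; combined with the inductive bound $\bar{m}_{t:T\mid t-1} \leq n$ from theorem \ref{thm:gauss-markov-likelihood}, this confirms that the new dimension $n$ is consistent with the required range $m \leq \bar{m}_{t-1:T\mid t-1} \leq n$.
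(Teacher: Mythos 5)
Your proof is correct and follows essentially the same route as the paper's own: stack the two quadratic forms into $\mathcal{Q}(x; \hat{y}_{t-1:T\mid t-1}, \hat{C}_{t-1:T\mid t-1})$, use the orthogonality of the full QR factor $V_t$ to rewrite this as $\mathcal{Q}(x; V_t^*\hat{y}_{t-1:T\mid t-1}, U_t)$, and let the zero block of $U_t$ split off the $x$-independent term $\abs[0]{e_{t-1}}^2$ into the multiplicative constant. The only addition is your closing dimensional bookkeeping, which the paper leaves implicit.
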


\begin{proof}
By assumption and theorem \ref{thm:backward-forward}, $h_{t-1:T\mid t-1}$ is given by
\begin{equation*}
\begin{split}
h_{t-1:T \mid t-1}(x) &= h_{t:T \mid t-1}(x) h_{t-1\mid t-1}(x) \\
&= c_{t:T\mid t-1} c_{t \mid t} \exp \Big[ -\frac{1}{2} \mathcal{Q}(x; \hat{y}_{t-1:T\mid t-1}, \hat{C}_{t-1:T \mid t-1}) \Big] \\
&= c_{t:T\mid t-1} c_{t \mid t} \exp \Big[ -\frac{1}{2} \mathcal{Q}(x; \hat{y}_{t-1:T\mid t-1},  V_t U_t ) \Big] \\
&= c_{t:T\mid t-1} c_{t \mid t} \exp \Big[ -\frac{1}{2} \mathcal{Q}(x; V_t^* \hat{y}_{t-1:T\mid t-1},  U_t ) \Big] \\
&= c_{t:T\mid t-1} c_{t \mid t} \exp\Big[ -\frac{ \abs[0]{e_{t-1}}^2 }{2} \Big] \exp \Big[-\frac{1}{2} \mathcal{Q}(x; \bar{y}_{t-1:T\mid t-1},  \bar{C}_{t-1:T\mid t-1} ) \Big],
\end{split}
\end{equation*}
which is the desired result.
\end{proof}

With propositions \ref{prop:likelihood-prediction}, \ref{prop:small-likelihood-update}, and \eqref{prop:big-likelihood-update} established,
theorem \ref{thm:gauss-markov-likelihood} hold by induction.
It remains to give formulae for the forward recursion. This is done in the following.

\subsection{The forward recursion}
The previous section established expressions for $\pi^{t:T}_{t \mid t-1}$ and  $h_{t:T\mid s}$ for $s = t-1, t$,
the parameters of which are obtained by alternating between proposition \ref{prop:likelihood-prediction} and
proposition \ref{prop:small-likelihood-update} or \ref{prop:big-likelihood-update}.
It remains to compute the a posteriori time marginals, $\{\pi^{1:T}_t\}_{t=0}^T$, and  the marginal likelihood, $L_{1:T}$.
They are all clearly Gaussian and will be denoted by
\begin{equation}
\pi^{1:T}_t(x) = \mathcal{N}(x; \mu^{1:T}_t, \Sigma^{1:T}_t), \quad t = 0, 1, \ldots, T
\end{equation}
According to theorem \ref{thm:backward-forward}, $L_{1:T}$ and $\pi^{1:T}_0$ are simply obtained by Bayes' rule.
By the same argument as in proposition \ref{prop:likelihood-prediction}, \emph{mutatis mutandis}, they are given by
\begin{subequations}\label{eq:marginal-likelihood-and-posterior-initial-distribution}
\begin{align}
S_0 &= \bar{C}_{1:T \mid 0} \Sigma_0 \bar{C}_{1:T \mid 0}^* + \mathrm{I} \\
K^{1:T}_0 &= \Sigma_0 \bar{C}_{1:T \mid 0}^* S_0^{-1} \\
\mu^{1:T}_0 &= \mu_0 + K^{1:T}_0 \big( \bar{y}_{1:T\mid 0} - \bar{C}_{1:T\mid 0} \mu_0 \big) \\
\Sigma^{1:T}_0 &= \Sigma_0 - K^{1:T}_0S_0 (K^{1:T}_0)^* \\
L_{1:T} &= c_{1:T\mid 0}(2\pi)^{\bar{m}_{t \mid 0}/2}\mathcal{N}( \bar{y}_{1:T \mid 0}; \bar{C}_{1:T\mid 0} \mu_0, S_0)
\end{align}
\end{subequations}
Furthermore, the remaining time marginals are then obtained by
\begin{subequations}\label{eq:posterior-time-marginal-moments}
\begin{align}
\mu^{1:T}_t &= \Phi^{t:T}_{t,t-1} \mu^{1:T}_{t-1} + u^{t:T}_{t, t-1} \\
\Sigma^{1:T}_t &=  \Phi^{t:T}_{t,t-1} \Sigma^{1:T}_{t-1} (\Phi^{t:T}_{t,t-1})^* + Q^{t:T}_{t, t-1}.
\end{align}
\end{subequations}

\section{The backward-forward recursion in Cholesky form}\label{sec:cholesky}
In order to ensure excellent numerical performance of the algorithm it is preferable to only work with Cholesky factors of the covariance matrices \citep[section 12]{Kailath2000}.
Computing $h_{t-1:T\mid t-1}$ from $h_{t-1:T\mid t-1}$ and $h_{t-1\mid t-1}$ does not involve any covariances as at all.
Hence the only problem in the backward recursion is to compute $h_{t:T\mid t-1}$ from $h_{t:T \mid t}$ and $\pi_{t\mid t-1}$ without ever forming a covariance matrix.
This can be done by so-called array algorithms \citep{Kailath2000,Anderson2012},
the key idea is that for a given matrix $A$ an upper triangular Cholesky factor of $A^* A$ may be obtained by the QR decomposition $A = QU$.
For an arbitrary matrix $A$, write $A \cong U$ when $U$ is the upper triangular factor in the QR factorization of $A$.

\begin{proposition}\label{prop:likelihood-prediction-cholesky}
Consider the same setting as propositon \ref{prop:likelihood-prediction},
then
\begin{equation}\label{eq:array-algorithm}
\begin{pmatrix}
\mathrm{I} & 0 \\
Q_{t, t-1}^{*/2} \bar{C}_{t:T\mid t}^* & Q_{t, t-1}^{*/2}
\end{pmatrix}
\cong
\begin{pmatrix}
\hat{R}_{t:T\mid t-1}^{*/2} & (\hat{K}^{t:T}_{t\mid t-1})^* \\
0  & (Q_{t, t-1}^{t:T} )^{*/2}
\end{pmatrix},
\end{equation}
where $\hat{K}^{t:T}_{t\mid t-1}$ is related to $\bar{K}^{t:T}_{t\mid t-1}$ by
\begin{equation}\label{eq:whitened-gain}
\bar{K}^{t:T}_{t\mid t-1} = \hat{K}^{t:T}_{t\mid t-1} \hat{R}_{t:T\mid t-1}^{-1/2},
\end{equation}
and the parameters of $h_{t:T \mid t-1}$ are given by
\begin{align}
\bar{y}_{t:T\mid t-1} &= \hat{R}_{t:T\mid t-1}^{-1/2} \big( \bar{y}_{t:T\mid t} - \bar{C}_{t:T\mid t} u_{t, t-1} \big) \\
\bar{C}_{t:T \mid t-1} &= \hat{R}_{t:T\mid t-1}^{-1/2} \bar{C}_{t:T \mid t} \Phi_{t, t-1} \\
\log c_{t:T \mid t-1}  &= \log c_{t:T \mid t-1}  -  \log \abs[0]{\hat{R}_{t:T\mid t-1}^{1/2}}.
\end{align}
Furthermore, the remaining parameters of $\pi^{t:T}_{t\mid t-1}$ are given by
\begin{align}
\Phi^{t:T}_{t, t-1} &= \Phi_{t, t-1} - \hat{K}^{t:T}_{t\mid t-1} \bar{C}_{t:T \mid t-1} \\
u^{t:T}_{t, t-1}    &= u_{t, t-1} + \hat{K}^{t:T}_{t\mid t-1} \bar{y}_{t:T\mid t-1}.
\end{align}
\end{proposition}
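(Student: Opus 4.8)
The plan is to reduce the array identity \eqref{eq:array-algorithm} to a single Gram-matrix computation, and then to obtain the remaining parameters by substituting the whitened-gain relation \eqref{eq:whitened-gain} into the formulas already proved in proposition \ref{prop:likelihood-prediction}. Write $A$ for the pre-array on the left of \eqref{eq:array-algorithm} and $U$ for the claimed upper-triangular factor on the right, and abbreviate $Q = Q_{t,t-1}$ and $\bar{C} = \bar{C}_{t:T\mid t}$. Recall that $A \cong U$ asserts that $U$ is the triangular factor of a QR factorization $A = Q_A U$ with $Q_A$ unitary; since any such factor satisfies $A^* A = U^* U$ and a positive-definite Gram matrix has a unique upper-triangular Cholesky factor (under a fixed sign convention), it suffices to verify that $U$ is upper triangular and that $U^* U = A^* A$. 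The first point is immediate: $U$ is block upper triangular with diagonal blocks $\hat{R}_{t:T\mid t-1}^{*/2}$ and $(Q^{t:T}_{t,t-1})^{*/2}$, which are conjugate transposes of lower-triangular Cholesky factors and hence upper triangular.

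The core step is the block computation of the two Gram matrices. A direct multiplication gives $A^* A$ with blocks $\mathrm{I} + \bar{C} Q \bar{C}^* = \hat{R}_{t:T\mid t-1}$ in position $(1,1)$, the terms $\bar{C} Q$ and $Q \bar{C}^*$ off the diagonal, and $Q$ in position $(2,2)$. The corresponding blocks of $U^* U$ are $\hat{R}^{1/2}\hat{R}^{*/2} = \hat{R}$, the off-diagonal terms $\hat{R}^{1/2}(\hat{K}^{t:T}_{t\mid t-1})^*$ and $\hat{K}^{t:T}_{t\mid t-1}\hat{R}^{*/2}$, and $\hat{K}^{t:T}_{t\mid t-1}(\hat{K}^{t:T}_{t\mid t-1})^* + Q^{t:T}_{t,t-1}$. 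Matching the $(1,1)$ blocks is trivial. Matching the $(2,1)$ block forces $\hat{K}^{t:T}_{t\mid t-1}\hat{R}^{*/2} = Q\bar{C}^*$, i.e. $\hat{K}^{t:T}_{t\mid t-1} = Q\bar{C}^*\hat{R}^{-*/2}$; together with $\bar{K}^{t:T}_{t\mid t-1} = Q\bar{C}^*\hat{R}^{-1}$ from proposition \ref{prop:likelihood-prediction} and $\hat{R}^{-1} = \hat{R}^{-*/2}\hat{R}^{-1/2}$, this is precisely the whitened-gain relation \eqref{eq:whitened-gain}, and the $(1,2)$ block is then automatically consistent. Matching the $(2,2)$ block requires $Q^{t:T}_{t,t-1} = Q - \hat{K}^{t:T}_{t\mid t-1}(\hat{K}^{t:T}_{t\mid t-1})^*$; since $\hat{K}^{t:T}_{t\mid t-1}(\hat{K}^{t:T}_{t\mid t-1})^* = Q\bar{C}^*\hat{R}^{-1}\bar{C} Q = \bar{K}^{t:T}_{t\mid t-1}\hat{R}(\bar{K}^{t:T}_{t\mid t-1})^*$, this agrees with the expression for $Q^{t:T}_{t,t-1}$ in proposition \ref{prop:likelihood-prediction}. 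Hence $U^* U = A^* A$ and the array identity follows.

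It remains to read off the parameters of $\pi^{t:T}_{t\mid t-1}$. The vector $\bar{y}_{t:T\mid t-1}$, the matrix $\bar{C}_{t:T\mid t-1}$, and the constant are unchanged from proposition \ref{prop:likelihood-prediction}, the only cosmetic difference being $-\tfrac{1}{2}\log\abs[0]{\hat{R}_{t:T\mid t-1}} = -\log\abs[0]{\hat{R}_{t:T\mid t-1}^{1/2}}$. For $\Phi^{t:T}_{t,t-1}$ and $u^{t:T}_{t,t-1}$ I substitute \eqref{eq:whitened-gain} into the corresponding formulas of proposition \ref{prop:likelihood-prediction}. Using $\bar{C}_{t:T\mid t-1} = \hat{R}^{-1/2}\bar{C}\Phi_{t,t-1}$, the term $\bar{K}^{t:T}_{t\mid t-1}\bar{C}\Phi_{t,t-1} = \hat{K}^{t:T}_{t\mid t-1}\hat{R}^{-1/2}\bar{C}\Phi_{t,t-1}$ collapses to $\hat{K}^{t:T}_{t\mid t-1}\bar{C}_{t:T\mid t-1}$, giving $\Phi^{t:T}_{t,t-1} = \Phi_{t,t-1} - \hat{K}^{t:T}_{t\mid t-1}\bar{C}_{t:T\mid t-1}$; likewise $\bar{y}_{t:T\mid t-1} = \hat{R}^{-1/2}(\bar{y}_{t:T\mid t} - \bar{C} u_{t,t-1})$ turns $\bar{K}^{t:T}_{t\mid t-1}(\bar{y}_{t:T\mid t} - \bar{C} u_{t,t-1})$ into $\hat{K}^{t:T}_{t\mid t-1}\bar{y}_{t:T\mid t-1}$, giving $u^{t:T}_{t,t-1} = u_{t,t-1} + \hat{K}^{t:T}_{t\mid t-1}\bar{y}_{t:T\mid t-1}$.

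The only delicate point, and hence the one I expect to be the main obstacle to a fully rigorous statement, is the uniqueness underpinning the relation $\cong$. The top-left Gram block $\hat{R}_{t:T\mid t-1} = \mathrm{I} + \bar{C} Q \bar{C}^*$ is bounded below by $\mathrm{I}$ and is therefore positive definite, so its triangular factor is unambiguous; but $Q^{t:T}_{t,t-1}$ may be only positive semidefinite, in which case the triangular factor delivered by QR need not be unique. This is not a genuine obstruction, since \eqref{eq:array-algorithm} is to be read as an identity between the specific upper-triangular factor returned by the factorization routine, but it should be flagged so that Cholesky uniqueness is invoked only where the Gram matrix is definite. Everything else is routine block algebra once the whitened-gain substitution has been made.
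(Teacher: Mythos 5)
Your proof is correct and follows essentially the same route as the paper: the paper simply cites the array identity \eqref{eq:array-algorithm} and the gain relation \eqref{eq:whitened-gain} as known results and then, exactly as you do, substitutes them into proposition \ref{prop:likelihood-prediction} and identifies terms. The only difference is that you additionally verify the cited array identity from scratch via the Gram-matrix computation $A^*A = U^*U$ (and correctly flag the Cholesky-uniqueness caveat when $Q^{t:T}_{t,t-1}$ is singular), which makes your argument more self-contained than the paper's but not different in substance.
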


\begin{remark}
Pay attention to the time index of $\bar{y}$ and $\bar{C}$ in the expressions for $\Phi^{t:T}_{t, t-1}$ and $u^{t:T}_{t, t-1}$ in proposition \ref{prop:likelihood-prediction-cholesky}.
It differs from the corresponding expression in proposition \ref{prop:likelihood-prediction}.
\end{remark}

\begin{proof}
The identities \eqref{eq:array-algorithm} and \eqref{eq:whitened-gain} are known \citep[section 6.5]{Anderson2012}.
The proof concludes by inserting these into the relations given by proposition \ref{prop:likelihood-prediction} and identifying terms.
\end{proof}

The parameters of the initial distribution are obtained by the same argument as proposition \ref{prop:likelihood-prediction-cholesky}.
More specifically, substitue $Q_{t, t-1}$ and $\bar{C}_{t:T\mid t}$ on the left-hand side of \eqref{eq:array-algorithm}
with $\Sigma_0$ and $\bar{C}_{1:T \mid 0}$ to obtain $S_0^{1/2}$,  $\hat{K}_0^{1:T}$, and $(\Sigma^{1:T}_0)^{*/2}$ as the right-hand side of \eqref{eq:array-algorithm}.
The remaining parameter, $\mu^{1:T}_0$, is then obtained by using the relation between $\hat{K}_0^{1:T}$ and $K_0^{1:T}$ corresponding to \eqref{eq:whitened-gain}
and substitution into \eqref{eq:marginal-likelihood-and-posterior-initial-distribution}
\begin{equation}
\mu^{1:T}_0 = \mu_0 + \hat{K}_0^{1:T} S_0^{-1/2} \big( \bar{y}_{1:T\mid 0} - \bar{C}_{1:T\mid 0} \mu_0 \big).
\end{equation}
Furthermore, the mean of the smoothing marginals is still obtained by \eqref{eq:posterior-time-marginal-moments},
while the Cholesky factor of the covariances clearly satisifes the following recursion:
\begin{equation}
\begin{pmatrix} \Phi^{t:T}_{t,t-1} (\Sigma^{1:T}_{t-1})^{1/2} &  (Q^{t:T}_{t, t-1})^{1/2} \end{pmatrix}^* \cong (\Sigma^{1:T}_t)^{*/2}.
\end{equation}
That is, $(\Sigma^{1:T}_t)^{*/2}$ is the upper triangular factor i  the QR factorization of the left-hand side of the preceeding expression.

\section{The two-filter formula}\label{sec:two-filter}
Assume the filtering densities $\pi^{1:t}_t$ have been obtained, by for example running a standard Kalman filter \citep{Sarkka2023} or a square-root variant \citep[section 6.5]{Anderson2012},
then they are given by
\begin{equation}
\pi^{1:t}_t(x) = \mathcal{N}(x; \mu_t^{1:t}, \Sigma^{1:t}_t).
\end{equation}
Furthermore, by theorem \ref{thm:gauss-markov-likelihood}, the two-filter formula \eqref{eq:two-filter} reduces to
\begin{equation}
\pi^{1:T}_t(x) \propto h_{t+1:T\mid t}(x) \pi^{1:t}_t(x)
\propto \mathcal{N}(\bar{y}_{t+1:T\mid t}; \bar{C}_{t+1:T\mid t}x, \mathrm{I}) \mathcal{N}(x; \mu^{1:t}_t, \Sigma^{1:t}_t).
\end{equation}
This is equivalent to obtaining the posterior in the following model,
\begin{align}
x_t &\sim \mathcal{N}(\mu^{1:t}_t, \Sigma^{1:t}_t), \\
\bar{y}_{t+1:T\mid t} \mid x_t &\sim \mathcal{N}(\bar{y}_{t+1:T\mid t}; \bar{C}_{t+1:T\mid t} x_t, \mathrm{I}),
\end{align}
hence the mean, $\mu^{1:T}_t$ and covariance, $\Sigma^{1:T}_t$, of $\pi^{1:t}_t$ can be obtained by a classical Kalman update or its' square-root variant. \citep{Anderson2012}.

\section{Relationship to information parametrization, maximum likelihood estimation, and the likelihood as a density}\label{sec:likelihood-as-density}
The purpose of this section is to relate the present parametrization of $h_{t:T \mid s}$ with the information parametrization and to clarify its interpretation as a density.
From the preceeding discussion, the likelihood $h_{t:T\mid s}$ for $s = t-1, t$ is given by
\begin{equation}
h_{t:T\mid s}(x) = c_{1:T \mid s} \exp \Big[ -\frac{1}{2} \mathcal{Q}(x; \bar{y}_{t:T\mid s}, \bar{C}_{t:T\mid s}) \Big].
\end{equation}
The quadratic form above is given by
\begin{equation}
\mathcal{Q}(x; \bar{y}_{t:T\mid s}, \bar{C}_{t:T\mid s}) = \langle x, \bar{C}_{t:T\mid s}^* \bar{C}_{t:T\mid s} x \rangle - 2\langle \bar{C}_{t:T\mid s}^*  \bar{y}_{t:T\mid s}, x \rangle + \mathtt{const}.
\end{equation}
The correspondence to the information parametrization is then (c.f. \eqref{eq:information-parametrization})
\begin{subequations}
\begin{align}
\xi^{t:T}_s &= \bar{C}_{t:T\mid s}^*  \bar{y}_{t:T\mid s}, \\
\Lambda^{t:T}_s &= \bar{C}_{t:T\mid s}^* \bar{C}_{t:T\mid s}.
\end{align}
\end{subequations}
Hence, $\bar{C}_{t:T\mid s}$ may be viewed as a $\bar{m}_{t\mid s} \times n$ square-root of $\Lambda^{t:T}_s$.
The relationship between $\xi^{t:T}_s$ and $\bar{y}_{t:T\mid s}$ is a bit more opaque and perhaps best understood in terms of the mean/covariance parametrization.
Let  $\mu^{t:T}_s$ be the minimum norm solution to the following equation
\begin{equation}
0 = \nabla \log h_{t:T\mid s}(x)  =  \bar{C}_{t:T\mid s}^* (\bar{y}_{t:T\mid s} - \bar{C}_{t:T\mid s} x),
\end{equation}
then $\mu^{t:T}_s$ is the maximum likelihood estimate of $x_s$ based on data $y_{t:T}$ of smallest norm.
\footnote{
Since the rank of $\bar{C}_{t:T\mid s}$ may be smaller than $n$, the maximum likelihood estimate is in general not unique.
}
Now define $\Sigma^{t:T}_s$ by
\begin{equation}
\Sigma^{t:T}_s = (\Lambda^{t:T}_s)^+ = ( \bar{C}_{t:T\mid s}^* \bar{C}_{t:T\mid s})^+ = \bar{C}_{t:T\mid s}^+ (\bar{C}_{t:T\mid s}^*)^+
\end{equation}
where $^+$ denotes a generalized matrix inverse and the last equality is taken known from e.g. \citet{Greville1966}.
An explicit expression of $\mu^{t:T}_s$ is then given by
\begin{equation}\label{eq:whitened-mean}
\mu^{t:T}_s = \Sigma^{t:T}_s \bar{C}_{t:T\mid s}^* \bar{y}_{t:T\mid s}
=  \bar{C}_{t:T\mid s}^+ (\bar{C}_{t:T\mid s}^*)^+ \bar{C}_{t:T\mid s}^* \bar{y}_{t:T\mid s}
= (\Sigma^{t:T}_s)^{1/2} (\bar{C}_{t:T\mid s}^*)^+ \bar{C}_{t:T\mid s}^* \bar{y}_{t:T\mid s}.
\end{equation}
The observation vector $\bar{y}_{t:T\mid s}$ has identity covariance conditioned on $x_s$ (by construction),
hence the conditional covariance of $\mu^{t:T}_s$ is given by
\begin{equation}
\Sigma^{t:T}_s \bar{C}_{t:T\mid s}^*   \bar{C}_{t:T\mid s} \Sigma^{t:T}_s = (\Lambda^{t:T}_s)^+ \Lambda^{t:T}_s (\Lambda^{t:T}_s)^+ =  (\Lambda^{t:T}_s)^+ =  \Sigma^{t:T}_s,
\end{equation}
which justifies the notation $\Sigma^{t:T}_s$ for $(\Lambda^{t:T}_s)^+$.
In view of \eqref{eq:whitened-mean}, the observation vector $\bar{y}_{t:T\mid s}$ may then be interpreted as a the mean vector $\mu^{t:T}_s$ ``whitened'' by $\Sigma^{t:T}_s$.

It is now tempting to interpret $h_{t:T\mid s}$ as an unnormalized Gaussian density with mean $\mu^{t:T}_s$ and covariance $\Sigma^{t:T}_s$.
This can indeed be done, but the density will not not be supported on the whole of $\mathbb{R}^n$ but rather on an affine subset, $\Omega_{t:T \mid s}$,
which is given by \citep[Chapter 8]{Rao1973}
\begin{equation}
\Omega_{t:T \mid s} = \big\{ x \colon x = \mu^{t:T}_s + v, \quad v \in \operatorname{range} \Sigma^{t:T}_s \big\}.
\end{equation}
The integral of $h_{t:T\mid s}$ over $\Omega_{t:T \mid s}$ is then given by
\begin{equation}
\begin{split}
\int_{\Omega_{t:T \mid s}} h_{t:T\mid s}(x) \dif x &=
c_{1:T \mid s} \int_{\Omega_{t \mid s}} \exp \Big[ -\frac{1}{2} (x - \mu^{t:T}_s)^* (\Sigma^{t:T}_s)^+ (x - \mu^{t:T}_s) \Big]\dif x  \\
&= c_{1:T \mid s} \abs[0]{2\pi \Sigma^{t:T}_s}^{1/2}_+,
\end{split}
\end{equation}
where $\abs{\cdotp}_+$ denotes the product of all non-zero eigenvalues, the so-called pseudo-determinant.
The notation $\dif x$ is here overloaded to also mean the Lebesgue measure on $\Omega_{t:T \mid s}$.
Thus the inference problem can be solved even when $\pi_0$ is flat (improper) on $\Omega_{t \mid s}$.
In particular, the following result is obtained by the preceeding discussion.

\begin{proposition}\label{prop:improper-initial-distribution}
Let the a priori initial distribution be given by
\begin{equation}
\pi_0(x) = \chi_{\Omega_{1:T \mid 0}}(x),
\end{equation}
where $\chi_A$ is the indicator function on the set $A$.
Then the marginal likelihood, $L_{1:T}$, and the a posteriori initial density, $\pi^{1:T}_0$, are given by
\begin{subequations}
\begin{align}
L_{1:T}  &=  c_{1:T \mid 0} \abs[0]{2\pi \Sigma^{t:T}_0}^{1/2}_+ \\
\pi^{1:T}_0(x) &=
\begin{cases}
\mathcal{N}(x; \mu^{t:T}_0, \Sigma^{t:T}_0), \quad x \in \Omega_{1:T \mid 0} \\
0,  \quad x \in \Omega_{1:T \mid 0}^{\mathsf{c}}
\end{cases}.
\end{align}
\end{subequations}
\end{proposition}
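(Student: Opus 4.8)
The plan is to recognize that both claims follow by direct substitution of the improper prior $\pi_0 = \chi_{\Omega_{1:T \mid 0}}$ into the Bayes' rule formulae of theorem \ref{thm:backward-forward}, combined with the integral of $h_{1:T \mid 0}$ over $\Omega_{1:T \mid 0}$ already evaluated in the preceding discussion. No new machinery is needed; the proposition is essentially the specialization of theorem \ref{thm:backward-forward} to $s = 0$, $t = 1$ under the chosen flat prior.

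First I would handle the marginal likelihood. By theorem \ref{thm:backward-forward}, $L_{1:T} = \int h_{1:T \mid 0}(x_0) \pi_0(x_0) \dif x_0$. With $\pi_0$ equal to the indicator of $\Omega_{1:T \mid 0}$, the integrand vanishes off $\Omega_{1:T \mid 0}$ and equals $h_{1:T \mid 0}$ on it, so, reading $\dif x$ as the Lebesgue measure intrinsic to the affine subspace $\Omega_{1:T \mid 0}$, the integral reduces to $\int_{\Omega_{1:T \mid 0}} h_{1:T \mid 0}(x_0) \dif x_0$. The preceding computation evaluated precisely this integral to $c_{1:T \mid 0} \abs[0]{2\pi \Sigma^{t:T}_0}^{1/2}_+$, which is the asserted value of $L_{1:T}$.

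Next, for the posterior I would again invoke theorem \ref{thm:backward-forward}, $\pi^{1:T}_0(x) = h_{1:T \mid 0}(x) \pi_0(x) / L_{1:T}$. Off $\Omega_{1:T \mid 0}$ the prior indicator is zero, giving $\pi^{1:T}_0(x) = 0$ there. On $\Omega_{1:T \mid 0}$, I would substitute the density form $h_{1:T \mid 0}(x) = c_{1:T \mid 0} \exp[-\tfrac{1}{2}(x - \mu^{t:T}_0)^* (\Sigma^{t:T}_0)^+ (x - \mu^{t:T}_0)]$ established above (via the identifications $\Lambda^{t:T}_0 = \bar{C}_{1:T \mid 0}^* \bar{C}_{1:T \mid 0}$, $\Sigma^{t:T}_0 = (\Lambda^{t:T}_0)^+$, and $\mu^{t:T}_0$ as the minimum-norm maximum-likelihood estimate) together with the value of $L_{1:T}$ just obtained. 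The constant $c_{1:T \mid 0}$ then cancels and the surviving factor is exactly the degenerate Gaussian density $\mathcal{N}(x; \mu^{t:T}_0, \Sigma^{t:T}_0)$ supported on $\Omega_{1:T \mid 0}$.

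The only genuinely delicate point is the measure-theoretic bookkeeping: when $\Sigma^{t:T}_0$ is rank-deficient, $\Omega_{1:T \mid 0}$ is a proper affine subspace of $\mathbb{R}^n$ of $n$-dimensional Lebesgue measure zero, so all integrals must be interpreted against the Lebesgue measure intrinsic to $\Omega_{1:T \mid 0}$, exactly the overloading of $\dif x$ flagged in the preceding discussion. Once this convention is fixed, the pseudo-determinant normalization $\abs[0]{2\pi \Sigma^{t:T}_0}^{1/2}_+$ makes the restricted density integrate to one, and the remainder is a routine cancellation. I therefore expect this alignment of the support of the prior with the support of $h_{1:T \mid 0}$ viewed as a density to be the main thing requiring care, the algebra being otherwise immediate.
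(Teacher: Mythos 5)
Your proposal matches the paper's own argument: the paper gives no separate proof, stating that the proposition "is obtained by the preceding discussion," which is exactly the substitution of the indicator prior into the Bayes'-rule formulae of theorem \ref{thm:backward-forward} together with the already-computed integral of $h_{1:T\mid 0}$ over $\Omega_{1:T\mid 0}$ with respect to the intrinsic Lebesgue measure. Your explicit attention to the degenerate-support bookkeeping is the same convention the paper flags when overloading $\dif x$, so the proof is correct and essentially identical in approach.
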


Note that $\pi_0^{1:T}$ can still be obtained by similar means as in proposition \ref{prop:improper-initial-distribution}
when $\pi_0$ is flat on all of $\mathbb{R}^n$.
The difference being that $\pi^{1:T}_0$ will remain flat on $\Omega_{1:T \mid 0}^{\mathsf{c}}$ rather than being unsupported.
However, the marginal likelihood, $L_{1:T}$, will be infinite.
Therefore, in such a siutation it is perhaps more appropriate to treat $x_0$ as a parameter in which case the marginal likelihood may
identified with $h_{1:T\mid 0}$.

\section{An example of unknown origin}\label{sec:experiment}
Consider an object moving in the plane and denote its position at time $\tau$ by $p(\tau)$.
Let the dynamics of the object be governed by the following continuous-time model
\begin{equation}\label{eq:object-dynamics}
\dif \ddot{p}(\tau) = \begin{pmatrix} \sigma_1 & 0 \\ 0 & \sigma_2 \end{pmatrix} \dif w(\tau), \quad \tau \geq 0,
\end{equation}
where $\ddot{p}$ is the acceleration and $w$ is a two dimensional standard Wiener process.
Starting at time $\tau = 127$, the position of the object is measured at every time unit up until $\tau = 256$ according to
\begin{equation}\label{eq:object-measurements}
y(\tau_k) = p(\tau_k) + \begin{pmatrix} \sqrt{\lambda_1} & 0 \\ 0 & \sqrt{\lambda_2} \end{pmatrix} v(\tau_k),
\end{equation}
where $v(\tau_k)$ is a sequence of two dimensional independent standard Gaussian vectors, $\tau_k = k$.
Collecting the acceleration, $\ddot{p}(\tau)$, the velocity, $\dot{p}(\tau)$, and the position, $p(\tau)$,
into a state vector $x(\tau)$ gives a continuous-discrete stochastic state-space model representation of \eqref{eq:object-dynamics} and \eqref{eq:object-measurements}.
Furthermore, by discretizing the continuous-time process and setting $x_k = x(\tau_k)$ and taking $x_0$ as completely unknown
gives the following discrete-time model:
\begin{subequations}
\begin{align}
x_0 &\sim \chi_{\mathbb{R}^6}(\cdotp) \\
x_k \mid x_{k-1} &\sim \mathcal{N}( \Phi x_{k-1}, Q), \quad k = 1, 2, \ldots, 256, \\
y_k \mid x_k &\sim \mathcal{N}(C x_k, R), \quad k = 127, 128, \ldots, 256.
\end{align}
\end{subequations}
The problem is to determine the position, $p(\tau_k)$, of the object for $k=0,1,\ldots, 126$,
that is at times before the measurement process started.

Two options based on the backward-forward method are considered:
\begin{enumerate}
\item The complete forward-backward algorithm using proposition \ref{prop:improper-initial-distribution} to compute $\pi_0^{127:256}$
\item Computing the maximum likelihood estiamte of $x_k$ by maximizing $h_{127:256 \mid k}$.
\end{enumerate}

The results are shown in figures \ref{fig:posterior} and \ref{fig:mle}, respectively.
As expected, the estimation quality of running the complete forward-backward algorithm is better than just doing maximum likelihood estimation.

\begin{figure}[t!]
\includegraphics{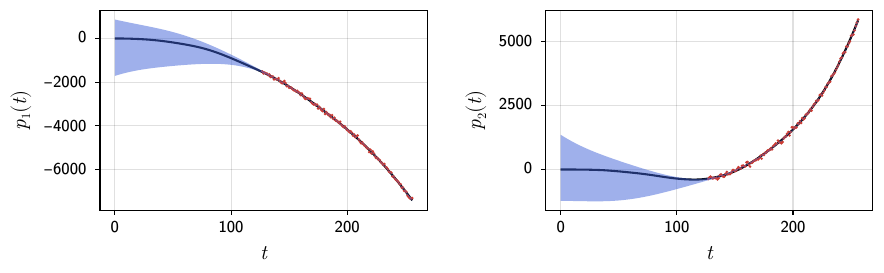}
\caption{
The position of the object (black), the position observations (red), and a $\pm 2\sigma$ credible interval of the position (blue).
The estimate was obtained by the forward-backward algorithm.
}\label{fig:posterior}
\end{figure}

\begin{figure}[t!]
\includegraphics{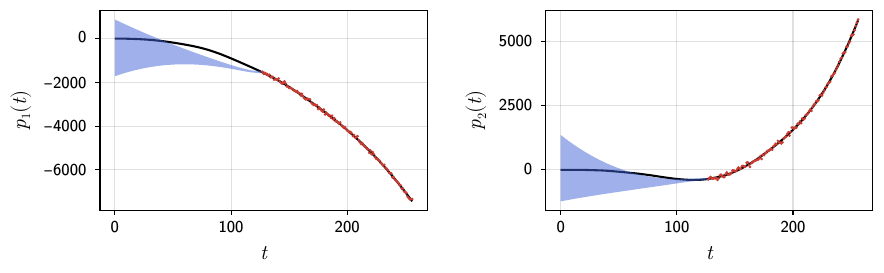}
\caption{
The position of the object (black), the position observations (red), and a $\pm 2\sigma$ confidence interval of the position (blue).
The estimate was obtained by maximum likelihood.
}\label{fig:mle}
\end{figure}

\section{Discussion}\label{sec:discussion}
This article presents a new backward-forward method was developed for inference in partially observed Gauss--Markov models.
The benefit is that all computations are carried out with respect to the covariance parametrization or its' square-root variant.
Furthermore, simple expressions for the \emph{forward} a posteriori transition distributions, $\pi^{t:T}_{t \mid t-1}$, were obtained.
This in turn gives a forward Markov representation of the path posterior, $\pi^{1:T}_{0:T}$,
which may be simpler to work with in certain applications such as parameter estimation using variational methods.

\section*{Acknowledgements}
FT was partially supported by the Wallenberg AI, Autonomous Systems and Software Program (WASP) funded by the Knut and Alice Wallenberg Foundation.

\appendix

\bibliographystyle{apalike}
\bibliography{refs}

\begin{thebibliography}{}

\bibitem[Ait-El-Fquih and Desbouvries, 2008]{AitElFquih2008}
Ait-El-Fquih, B. and Desbouvries, F. (2008).
\newblock On {B}ayesian fixed-interval smoothing algorithms.
\newblock {\em IEEE Transactions on Automatic Control}, 53(10):2437--2442.

\bibitem[Anderson and Moore, 2012]{Anderson2012}
Anderson, B. D.~O. and Moore, J.~B. (2012).
\newblock {\em Optimal filtering}.
\newblock Courier Corporation.

\bibitem[Askar and Derin, 1981]{Askar1981}
Askar, M. and Derin, H. (1981).
\newblock A recursive algorithm for the {B}ayes solution of the smoothing
  problem.
\newblock {\em IEEE Transactions on Automatic Control}, 26(2):558--561.

\bibitem[Balenzuela et~al., 2018]{Balenzuela2018}
Balenzuela, M.~P., Dahlin, J., Bartlett, N., Wills, A.~G., Renton, C., and
  Ninness, B. (2018).
\newblock Accurate {G}aussian mixture model smoothing using a two-filter
  approach.
\newblock In {\em 2018 IEEE Conference on Decision and Control (CDC)}, pages
  694--699. IEEE.

\bibitem[Balenzuela et~al., 2022]{Balenzuela2022}
Balenzuela, M.~P., Wills, A.~G., Renton, C., and Ninness, B. (2022).
\newblock A new smoothing algorithm for jump {M}arkov linear systems.
\newblock {\em Automatica}, 140:110218.

\bibitem[Bresler, 1986]{Bresler1986}
Bresler, Y. (1986).
\newblock Two-filter formulae for discrete-time non-linear {B}ayesian
  smoothing.
\newblock {\em International Journal of Control}, 43(2):629--641.

\bibitem[Capp{\'e} et~al., 2005]{Cappe2005}
Capp{\'e}, O., Moulines, E., and Ryd\'{e}n, T. (2005).
\newblock {\em Inference in Hidden Markov Models}.
\newblock Springer.

\bibitem[Fraser and Potter, 1969]{Fraser1969}
Fraser, D. and Potter, J. (1969).
\newblock The optimum linear smoother as a combination of two optimum linear
  filters.
\newblock {\em IEEE Transactions on Automatic Control}, 14(4):387--390.

\bibitem[Greville, 1966]{Greville1966}
Greville, T. N.~E. (1966).
\newblock Note on the generalized inverse of a matrix product.
\newblock {\em SIAM Review}, 8(4):518--521.

\bibitem[Helmick et~al., 1995]{Helmick1995}
Helmick, R.~E., Blair, W.~D., and Hoffman, S.~A. (1995).
\newblock Fixed-interval smoothing for {M}arkovian switching systems.
\newblock {\em IEEE Transactions on Information Theory}, 41(6):1845--1855.

\bibitem[Kailath et~al., 2000]{Kailath2000}
Kailath, T., Sayed, A.~H., and Hassibi, B. (2000).
\newblock {\em Linear estimation}.
\newblock Prentice Hall.

\bibitem[Kalman, 1961]{Kalman1961}
Kalman, R.~E. (1961).
\newblock New results in linear filtering and prediction theory.
\newblock {\em Transactions of the ASME, Journal of Basic Engineering},
  82:35--45.

\bibitem[Kitagawa, 2023]{Kitagawa2023}
Kitagawa, G. (2023).
\newblock Revisiting the two-filter formula for smoothing for state-space
  models.
\newblock {\em arXiv preprint arXiv:2307.03428}.

\bibitem[Lindsten et~al., 2013]{Lindsten2013}
Lindsten, F., Bunch, P., Godsill, S.~J., and Sch{\"o}n, T.~B. (2013).
\newblock {R}ao-{B}lackwellized particle smoothers for mixed linear/nonlinear
  state-space models.
\newblock In {\em 2013 IEEE International Conference on Acoustics, Speech and
  Signal Processing}, pages 6288--6292. IEEE.

\bibitem[Lindsten et~al., 2015]{Lindsten2015}
Lindsten, F., Bunch, P., S{\"a}rkk{\"a}, S., Sch{\"o}n, T.~B., and Godsill,
  S.~J. (2015).
\newblock {R}ao-{B}lackwellized particle smoothers for conditionally linear
  {G}aussian models.
\newblock {\em IEEE Journal of Selected Topics in Signal Processing},
  10(2):353--365.

\bibitem[Liu et~al., 2010]{Liu2010}
Liu, H., Nassar, S., and El-Sheimy, N. (2010).
\newblock Two-filter smoothing for accurate {INS/GPS} land-vehicle navigation
  in urban centers.
\newblock {\em IEEE Transactions on Vehicular Technology}, 59(9):4256--4267.

\bibitem[Mayne, 1966]{Mayne1966}
Mayne, D.~Q. (1966).
\newblock A solution of the smoothing problem for linear dynamic systems.
\newblock {\em Automatica}, 4(2):73--92.

\bibitem[Rao, 1973]{Rao1973}
Rao, C.~R. (1973).
\newblock {\em Linear Statistical Inference and Its Applications}, volume~2.
\newblock Wiley New York.

\bibitem[Rauch et~al., 1965]{Rauch1965}
Rauch, H.~E., Tung, F., and Striebel, C.~T. (1965).
\newblock Maximum likelihood estimates of linear dynamic systems.
\newblock {\em AIAA journal}, 3(8):1445--1450.

\bibitem[S{\"a}rkk{\"a} et~al., 2012]{Sarkka2012}
S{\"a}rkk{\"a}, S., Bunch, P., and Godsill, S.~J. (2012).
\newblock A backward-simulation based {R}ao-{B}lackwellized particle smoother
  for conditionally linear {G}aussian models.
\newblock {\em IFAC Proceedings Volumes}, 45(16):506--511.

\bibitem[S{\"a}rkk{\"a} and Garc{\'\i}a-Fern{\'a}ndez, 2020]{Sarkka2020}
S{\"a}rkk{\"a}, S. and Garc{\'\i}a-Fern{\'a}ndez, {\'A}.~F. (2020).
\newblock Temporal parallelization of {B}ayesian smoothers.
\newblock {\em IEEE Transactions on Automatic Control}, 66(1):299--306.

\bibitem[S{\"a}rkk{\"a} and Svensson, 2023]{Sarkka2023}
S{\"a}rkk{\"a}, S. and Svensson, L. (2023).
\newblock {\em Bayesian Filtering and Smoothing}.
\newblock Cambridge University Press.

\bibitem[Stillfjord and Tronarp, 2023]{Stillfjord2023}
Stillfjord, T. and Tronarp, F. (2023).
\newblock Computing the matrix exponential and the {C}holesky factor of a
  related finite horizon {G}ramian.
\newblock {\em arXiv preprint arXiv:2310.13462}.

\bibitem[Watanabe, 1985]{Watanabe1985}
Watanabe, K. (1985).
\newblock On the relationship between the {L}agrange multiplier method and the
  two-filter smoother.
\newblock {\em International Journal of Control}, 42(2):391--410.

\end{thebibliography}

\end{document}